\newtheorem{lemma}{Lemma}
\title[Valleys of the Supermembrane]{Measure of the potential valleys of the supermembrane theory}
\author[L.~Boulton]{Lyonell Boulton$^1$}
\address{$^1$Department of Mathematics and MIMS, Heriot-Watt University, Edinburgh, EH14 4AS, United Kingdom. \newline
Department of Mathematics, Faculty of Nuclear Sciences and Physical Engineering, Czech Technical University in Prague, Trojanova~13, 12000 Prague 2, Czech Republic.}
\email{l.boulton@hw.ac.uk; boultlyo@fjfi.cvut.cz}
\author[M.P.~Garc{\'\i}a del Moral]{Mar{\'\i}a Pilar Garc{\'\i}a del Moral$^2$}
\address{$^{2,3}$Departamento de F\'{\i}sica, Universidad de Antofagasta, Aptdo 02800, Chile.}
\email{maria.garciadelmoral@uantof.cl; alvaro.restuccia@uantof.cl}
\author[{A}. Restuccia]{{\'A}lvaro Restuccia$^3$}
\begin{document}

\maketitle

\begin{abstract}
We analyse the measure of the regularized matrix model of the supersymmetric potential valleys, $\Omega$, of the Hamiltonian of non zero modes of supermembrane theory. This is the same as  the Hamiltonian of the BFSS matrix model. We find sufficient conditions for this measure to be finite, in terms the spacetime dimension. For $SU(2)$ we show that the measure of $\Omega$ is finite for the regularized supermembrane matrix model when the transverse dimensions in the light cone gauge  $\mathrm{D}\geq 5$. This covers the important case of seven and eleven dimensional supermembrane theories, and implies the compact embedding of the Sobolev space $H^{1,2}(\Omega)$ onto $L^2(\Omega)$. The latter is a main step towards the confirmation of the existence and uniqueness of ground state solutions of the outer Dirichlet problem for the Hamiltonian of the $SU(N)$ regularized $\mathrm{D}=11$ supermembrane, and might eventually allow patching with the inner solutions.
\end{abstract}

\newpage

\section{Introduction}
The supermembrane theory was derived in \cite{bst}. Its $SU(N)$ regularization was introduced in \cite{hoppe} and in \cite{dwhn,dwmn} the $SU(N)$ regularized Hamiltonian in the light cone gauge was obtained. The zero mode eigenfunction can be described in terms of the $\mathrm{D}=11$ supergravity multiplet, however the existence of the ground state of the Hamiltonian requires a proof of the existence of a unique nontrivial eigenfunction for the nonzero modes. Moreover, in order to be identified with the 11D supergravity multiplet, it must be invariant under $SO(9)$. The existence of this ground state is still an ellusive open problem. For an (incomplete) list of contributions towards its solution, mainly in asymptotic regimes, c.f.  \cite{hoppe,hasler,fh,michishita,hl,hlt,frolich}.

In 11D supermembrane theory, the zero modes associated with the center of mass and the non-zero modes associated with the internal excitations, decouple. The groundstate of the Hamiltonian with zero eigenvalue and their associated eigenfunction can be described in terms of  the $\mathrm{D}=11$ supergravity multiplet once it is proven the existence for the non-zero modes of a unique nontrivial eigenfunction invariant under the R-symmetry $SO(9)$ \cite{dwhn}. The $SU(N)$ regularized Hamiltonian for nonzero modes coincides with the Hamiltonian of the BFSS matrix model, \cite{bfss}. This Hamiltonian was first obtained as the $0+1$ reduction of the 10D Super Yang Mills \cite{claudson,halpern}. Hence, the existence of the ground state for this matrix model, turns out to be exactly the missing step in the proof of the existence of the ground state for the $\mathrm{D}=11$ supermembrane. In \cite{sethi-stern} a prescription to compute the index of non-Fredholm operators was presented. Although a precise definition of the domain of the non-Fredholm operator ( which certainly is not the whole Hilbert space) is not given and hence the definition of the trace involved in the evaluation of the index is not precise, they introduce a prescription, which they claim, allows to compute an index of the non-Fredholm operator. Using this result  they claim to obtain a unique ground state for the Hamiltonian of the BFSS matrix model with gauge group SU(2).

In this paper we emphasise the bounds on the ground state wave function on the flat directions, which is absent in the previous works. Although we consider the SU(2) case, our approach indicates a new way to analyse the SU(N), N going to infinity, model. This is the relevant gauge group for the D=11 Supermembrane.

 Due to its complexity, a natural approach for the solution of this problem is to divide it into three parts, \cite{bgmrO,bgmrSU2,bgmrGS}. Firstly, determine the existence and uniqueness of a solution for the Dirichlet problem on a bounded region of arbitrary radius. Secondly, determine the existence and uniqueness of the solution for the Dirichlet problem on the unbounded complementary region. Thirdly, determine if both these solutions match with one another and can be smoothly patched into a single solution of the full problem. The overall state will then be the ground state of the Hamiltonian of the non-zero modes of the $\mathrm{D}=11$ supermembrane. In \cite{bgmrGS}  (also \cite{bgmrO,bgmrSU2}) we settled the first step. The  present work is about the second step.

Our proof of existence and uniqueness for bounded regions, relied on two fundamental properties of the Hamiltonian: i) its supersymmetric structure as $H=\{Q,Q^{\dag}\}$ and ii) the polynomial form of the potential expression as a function of the bosonic coordinates. We combined these two properties with iii) the Rellich-Kondrashov compact embedding theorem. Then the existence and uniqueness followed from ellipticity and the Lax-Milgram theorem for strongly coercive sesquilinear forms.

The Rellich-Kondrashov compact embedding theorem holds true for every bounded region of $\mathbb{R}^n$, but it might fail in general on unbounded regions. For the second step, one requires an estimate for the contribution of the potential to the mean value of the Hamiltonian, taking into account that this potential is unbounded from below along the sub-varieties where the bosonic potential vanishes. An estimate must therefore be obtained on the ``valleys'', denoted by $\Omega$ below, surounding these sub-varieties. In the complement of $\Omega$, the bosonic potential is the dominant part of the potential, it is strictly positive and it tends to infinity at infinity. In these ``good'' regions, the existence and uniqueness of the solution to the Dirichlet problem follows from general arguments, similar to those used for non-relativistic Schr{\"o}dinger operators. We expect that the ground state (if it exists) should extend along $\Omega$ and decay rapidly to zero in the complement of $\Omega$.

Hard work, however, has to be conducted in the interior of $\Omega$. That is, to show the existence and uniqueness of the solution to the Dirichlet problem on $\Omega$ minus a ball of finite radius. In order to achieve this goal, we devote this letter to establishing that the Rellich-Kondrashov compact embedding theorem holds true on $\Omega$ for $\mathrm{D}\geq 5$ on Sobolev spaces defined following \cite{berger-schechter}. Concretely, we show that the measure (Lebesgue measure) of the unbounded set
$$\Omega=\{x\in \mathbb{R}^n: V_{B}(x)<1\}$$
 is finite and decays at infinity for any $\mathrm{D}\geq 5$. See lemma~\ref{L2}. This includes, for the bosonic potential, the important cases of the $\mathrm{D}=7$ and $\mathrm{D}=11$ supermembrane.

 Consequently, with respect to properties i), ii) and iii), arguments analogous to i) and iii) can be made on $\Omega$. Property ii) is not valid in $\Omega$, but it might be possible to consider an estimate of the fermionic contribution to the mean value of the Hamiltonian which allows a different version of coercivity. We hope to report on this eventually.

\section{Formulation of the problem}
Before establishing our main current contribution, let us summarize the formulation of the problem. We follow the seminal work \cite{dwhn}. The $\mathrm{D}=11$ supermembrane is described in terms of the membrane coordinates $X^m$  and fermionic coordinates $\theta _{\alpha}$,  transforming as a  Majorana spinor on the target space. Both fields are scalar under worldvolume transformations. When the theory is formulated in the Light Cone Gauge the residual symmetries are global supersymmetry, the R-symmetry $SO(9)$ and a gauge symmetry, the area preserving diffeomorphisms on the base manifold.

The fields of the Hamiltonian and the wavefunction are decomposed according to the  symmetry group $SO(9)$ in such a way that the Majorana spinor is expressed in terms of the linear representations of the subgroup $SO(7)\times U(1)\subset SO(9)$.

The bosonic coordinates $X^{M}$ are decomposed as $(X^m,Z,\overline{Z})$. Where $X^m$ for $m=1,\dots,7$ are the components of a $SO(7)$ vector, and $Z,\overline{Z}$ are the complex scalars
\[Z=\frac{1}{\sqrt{2}}(X^8+iX^9)\quad \text{and} \quad \overline{Z}=\frac{1}{\sqrt{2}}(X^8-iX^9),\]
which transform under $U(1)$. The corresponding bosonic  canonical momenta is accordingly decomposed as a $SO(7)$ vector of components $P_m$ and a complex $U(1)$ momentum $\mathcal{P}$ and its conjugate $\overline{\mathcal{P}}$: $P_{M}=(P_{m},\mathcal{P},\overline{\mathcal{P}})$ where
 \[\mathcal{P}=\frac{1}{\sqrt{2}}(P^8-iP^9) \quad \text{and} \quad \overline{\mathcal{P}}=\frac{1}{\sqrt{2}}(P^8+iP^9).\]

Denoting by $\lambda_{\alpha}$ the invariant $SO(7)$ spinor of the operator associated to the fermionic coordinates. We can express it in terms of an eight component complex spinor $\theta^{\pm}$ eigenstate of $\gamma_9$,  for
$\gamma_9\theta^{\pm}=\pm\theta^{\pm}$, such that
\[\lambda^{\dag}=2^{1/4}(\theta^+-i\theta^{-})\quad \text{and} \quad \lambda=2^{1/4}(\theta^++i\theta^{-}),\]
where  $\lambda^{\dag}$ is the fermionic canonical conjugate momentum to $\lambda$.

Once the theory is regularized by means of the group $SU(N)$, the field operators are labeled by an $SU(N)$ index $A$ and they transform in the adjoint representation of the group.

The realization of the wavefunctions is formulated in terms of the $2^{8(N^2-1)}$ an irreducible representation of the Clifford algebra span by $(\lambda^{\dag}+\lambda)$ and $i(\lambda^{\dag}-\lambda)$ in the fermion Fock space. The Hilbert space of physical states consists of the wavefunctions which takes values in the fermion Fock space.

Once it is shown that the zero mode states transform under $SO(9)$ as a $[(44\oplus 84)_{\mathrm{bos}}\oplus 128_{\mathrm{fer}}]$ representation which corresponds to the massless $\mathrm{D}=11$ supergravity supermultiplet,
the construction of the ground state wave function reduces to finding a nontrivial solution to
\begin{equation*}\label{A}H\Psi=0\end{equation*}
where $H=\frac{1}{2}M$ and $\Psi\equiv \Psi^{\mathrm{non-zero}}$. The latter  is required to be a singlet under $SO(9)$ and $M$ is the mass operator of the supermembrane.
The Hamiltonian associated to the the regularized mass operator of the supermembrane \cite{dwhn} is
\[
\begin{aligned}
H&=\frac{1}{2}M^2=-\Delta+V_B+V_F\\
\Delta&=\frac{1}{2}\frac{\partial^2}{\partial X^i_A\partial X_i^A}+\frac{1}{2}\frac{\partial^2}{\partial Z_A\partial \overline{Z}^A}\\
V_{B}&=\frac{1}{4}f_{AB}^Ef_{CDE}\{X_i^AX_j^BX^{iC}X^{jD}+4X_i^AZ^BX^{iC}\overline{Z}^{D}+2Z^A\overline{Z}^B\overline{Z}^{C}Z^{D}\}\\
V_F&=if_{ABC}X_i^A\lambda_{\alpha}^B\Gamma_{\alpha\beta}^i\frac{\partial}{\partial\lambda_{\beta C}}
+\frac{1}{\sqrt{2}}f_{ABC}(Z^A\lambda_{\alpha}^B\lambda_{\alpha}^C-\overline{Z}^A\frac{\partial}{\partial \lambda_{\alpha\beta}}\frac{\partial}{\partial\lambda_{\alpha C}}).
\end{aligned}
\]
The generators of the local $SU(N)$ symmetry are
$$\varphi^A =f^{ABC}\left( X_{i}^B\partial_{X_i^C}+Z_B\partial_{Z^C}+\overline{Z}_B\partial_{\overline{Z}^C}
+\lambda_{\alpha}^{B}\partial_{\lambda_{\alpha}^C}\right).$$

From the supersymmetric algebra, it follows that the Hamiltonian can be express in terms of the supercharges as
$$H =\{Q_{\alpha},Q^{\dagger}_{\alpha}\}$$
for the physical subspace of solutions, given by the kernel of the first class constraint $\varphi^A$ of the theory, that is
\begin{equation*} \label{constraint} \varphi^A\Psi=0.\end{equation*}

The supercharges associated to  modes invariant under $SO(7)\times U(1)$ are given explicitly in \cite{dwhn} as
\[
\begin{aligned}Q_{\alpha}&=\left\{-i\Gamma_{\alpha\beta}^i\partial_{X_i^A}+\frac{1}{2}f_{ABC}X_i^B X_j^C \Gamma^{ij}_{\alpha\beta}-f_{ABC} Z^B\overline{Z}^C\delta_{\alpha\beta}\right\}\lambda_{\beta}^A\\
&+\sqrt{2}\left\{\delta_{\alpha\beta}\partial_{Z^A}+i f_{ABC}X_i^B \overline{Z}^C \Gamma^i_{\alpha\beta} \right\}\partial_{\lambda_{\beta}^A}\end{aligned}\]
and
\[
\begin{aligned}Q_{\alpha}^{\dagger}&=\left\{i\Gamma_{\alpha\beta}^i\partial_{X_i^A}+\frac{1}{2}f_{ABC}X_i^B X_j^C \Gamma^{ij}_{\alpha\beta}+f_{ABC} Z^B\overline{Z}^C\delta_{\alpha\beta}\right\}\partial_{\lambda_{\beta}^A}\\
&+\sqrt{2}\left\{-\delta_{\alpha\beta}\partial_{Z^A}+i f_{ABC}X_i^B \overline{Z}^C \Gamma^i_{\alpha\beta} \right\}\lambda_{\beta}^A.\end{aligned}\]
The corresponding superalgebra satisfies, \cite{dwhn}
\[\begin{aligned}&\{Q_{\alpha},Q_{\beta}\}=2\sqrt{2}\delta_{\alpha\beta}\overline{Z}^A\varphi_A,\\
&\{Q_{\alpha}^{\dagger},Q_{\beta}^{\dagger}\}=2\sqrt{2}\delta_{\alpha\beta}Z^A\varphi^A,\\
&\{Q_{\alpha},Q^{\dagger}_{\beta}\}=2\delta_{\alpha\beta}H-2i\Gamma^i_{\alpha\beta}\varphi_A.\end{aligned}\]
These must annihilate the physical states. The Hamiltonian $H$ is a positive operator which annihilates $\Psi$, on the physical subspace, if and only if  $\Psi$ is a \emph{singlet} under supersymmetry\footnote{$\Psi_0$, the zero mode wave function, in distinction is a supermultiplet under supersymmetry.}. In such a case,
$$Q_{\alpha}\Psi=0\quad \text{and}\quad Q_{\alpha}^{\dagger}\Psi=0.$$

The latter ensures that the wavefunction is massless, however it does not guarantee that the ground-state wave function is the corresponding supermultiplet associated to supergravity. For this, $\Psi$ must also become a singlet under $SO(9)$. The spectrum of $H$ in $L^2(R^n)$ is continuous \cite{dwln}, comprising the segment $[0,\infty)$.

The previous supersymmetric structure implies the following. This is the property i) described above. The current interest is the case $\Sigma=\Omega$.

\begin{lemma} \label{lemma2}
Let $\Sigma\subset \mathbb{R}^n$ be a region with smooth boundary.
If $u\in H_0^1(\Sigma)$ satisfies $Qu=Q^{\dag}u=0$ in $\Sigma$ then $u=0$ in $\overline{\Sigma}$.
\end{lemma}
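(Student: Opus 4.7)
The plan is to push the first-order hypothesis $Qu=Q^\dagger u=0$ up to a second-order elliptic equation $Hu=0$ through the supersymmetric anticommutator, and then to propagate the Dirichlet condition across $\partial\Sigma$ by unique continuation. Summing the identity $\{Q_\alpha,Q^\dagger_\beta\}=2\delta_{\alpha\beta}H-2i\Gamma^i_{\alpha\beta}\varphi_A$ along the diagonal one obtains $\sum_\alpha\{Q_\alpha,Q^\dagger_\alpha\}=32\,H$, because the Dirac matrices $\Gamma^i$ are traceless so the constraint term drops out. Hence $Qu=Q^\dagger u=0$ in $\Sigma$ gives $Hu=0$ in $\Sigma$ in the distributional sense; since $H=-\Delta+V_B+V_F$ is elliptic with polynomial (hence real-analytic) coefficients and $\partial\Sigma$ is smooth, classical elliptic regularity up to the boundary promotes $u$ to $H^2(\Sigma)\cap C^\infty(\Sigma)$ with a well-defined normal trace $\partial_n u\in L^2(\partial\Sigma)$.

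Next I would extend $u$ by zero outside $\Sigma$ to a function $\tilde u\in H^1(\mathbb{R}^n)$. Since $Q$ and $Q^\dagger$ are first order with smooth coefficients and $u$ has vanishing Dirichlet trace, the weak gradient identity $\nabla\tilde u=(\nabla u)\mathbf{1}_\Sigma$ forces $Q\tilde u=0$ and $Q^\dagger\tilde u=0$ as $L^2$ functions on all of $\mathbb{R}^n$. A second application of the anticommutator identity then yields $32\,H\tilde u=0$ on $\mathbb{R}^n$ distributionally. On the other hand, testing $H\tilde u$ against $\phi\in C_c^\infty(\mathbb{R}^n)$ and integrating by parts on $\Sigma$, using $u|_{\partial\Sigma}=0$ together with $Hu=0$ in $\Sigma$, one obtains the single-layer distribution
\[
H\tilde u \;=\;(\partial_n u)\,\delta_{\partial\Sigma}
\]
supported on the boundary. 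Matching the two expressions forces $\partial_n u=0$ on $\partial\Sigma$.

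With the Cauchy data $u|_{\partial\Sigma}=\partial_n u|_{\partial\Sigma}=0$ now in hand, the zero-extension $\tilde u$ belongs to $H^2_{\mathrm{loc}}(\mathbb{R}^n)$ and still satisfies $H\tilde u=0$ globally, while vanishing identically on the non-empty open set $\mathbb{R}^n\setminus\overline{\Sigma}$. The operator $H$ has scalar principal part $-\Delta\,\mathrm{Id}$ and polynomial zero-order term acting on the (finite-dimensional) fermion Fock space, so Aronszajn's strong unique continuation theorem, applied componentwise in the Fock basis, forces $\tilde u\equiv 0$ on $\mathbb{R}^n$. Therefore $u=0$ in $\overline{\Sigma}$.

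The delicate step is the boundary matching that produces $\partial_n u=0$ on $\partial\Sigma$: it relies on sufficient elliptic regularity of $u$ up to the smooth boundary to make sense of the single-layer distribution, together with the non-trivial observation that the zero-extension of an $H_0^1$ solution of a first-order elliptic system is globally annihilated by that system. Once this boundary information is secured, the unique continuation step is classical for Schr\"odinger-type operators with real-analytic coefficients.
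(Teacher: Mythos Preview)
Your argument is correct and reaches the same conclusion as the paper, but by a genuinely different mechanism. The paper proceeds entirely inside $\overline{\Sigma}$: from analytic hypoellipticity it gets $u$ real-analytic up to $\partial\Sigma$, then exploits the explicit Dirac-type principal symbol of $Q$ and $Q^{\dagger}$ to read off $\partial_n u=0$ directly from the first-order equations restricted to the boundary (tangential derivatives vanish because $u|_{\partial\Sigma}=0$, and the normal symbol is invertible), and finally invokes Cauchy--Kowalewski plus analytic continuation. You instead pass to the second-order operator $H$ via the trace of the superalgebra, extend by zero, and obtain $\partial_n u=0$ by matching the two distributional computations of $H\tilde u$; the vanishing of $u$ then comes from unique continuation from the exterior region. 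Your route is softer---it never touches the explicit symbol of $Q$ and would survive with merely smooth (non-analytic) coefficients provided $H$ enjoys the unique continuation property---whereas the paper's route is more self-contained and does not require $\mathbb{R}^n\setminus\overline{\Sigma}$ to be non-empty. Two small points worth tightening: your phrase ``applied componentwise in the Fock basis'' is slightly misleading since $V_F$ couples the Fock components; what actually makes Aronszajn--type unique continuation go through is the differential inequality $|\Delta u_k|\leq C\sum_j|u_j|$ for the coupled system (or, more simply here, analyticity of solutions for polynomial coefficients). And the numerical factor in $\sum_\alpha\{Q_\alpha,Q_\alpha^{\dagger}\}$ is $16H$ rather than $32H$ for the eight-component $SO(7)$ spinor index, though of course this does not affect the argument.
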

\begin{proof}
The argument is the same as in \cite{bgmrGS}. If $u$ satisfies $Qu=Q^{\dag}u=0$, then $u$ is analytic in $\Sigma$, since the potential is analytic in $x$. Hence $Qu=Q^{\dag}u=0$ also on the boundary $\partial\Sigma$. Then using the explicit expression of $Q$ and $Q^{\dag}$, we obtain that the normal derivative of $u$ on $\partial \Sigma$ is also zero. We thus have $u=0$ and $\partial_n u=0$ simultaneously on $\partial\Sigma$. By virtue of the Cauchy-Kowaleski theorem on $\partial\Sigma$, $u=0$ in a neighborhood of $\partial\Sigma$. Since $u$ is analytic we conclude that $u=0$ in $\overline{\Sigma}$.
\end{proof}

\section{Analysis of the Lebesgue measure of the bosonic valleys}

We simplify the proof of our main result below by denoting the bosonic coordinates with $X^A_i$, for $i=1,\dots,\mathrm{D}$ and $A=1,2,3$ the $SU(2)$ index. We will denote a vector of $3\times \mathrm{D}$ components by means of D vectors of 3 components: $\vec{X}_i\in \mathbb{R}^3$, $i=1,\dots,\mathrm{D}.$ We denote with a single bar, $\vert\cdot\vert$, the  Euclidean norm on any number of components. The bosonic potential reduces to
\[
V_B=\frac{1}{2}\sum_{i,j=1}^D\vert \vec{X}_i\wedge \vec{X}_j\vert^2.
\]
Below we repeatedly use the following property without further mention. If $R$ is any rotation of $\mathbb{R}^3$, then
\[
     V_B(\vec{X}_1,\ldots,\vec{X}_{\mathrm{D}})=
    V_B(R\vec{X}_1,\ldots,R\vec{X}_{\mathrm{D}}).
\]

For $a_0\geq 0$, let
\[
    \Omega_{a_0}=\{ (\vec{X}_1,\ldots,\vec{X}_{\mathrm{D}}): V_B<1, \vert (\vec{X}_1,\ldots,\vec{X}_{\mathrm{D}})\vert \geq a_0\}.
\]
So that $\Omega=\Omega_0$. We denote the (Lebesgue) measure of any of these sets by $\mu(\Omega_{a_0})$.

\begin{lemma}\label{L2}
For $D\geq 5$, $\mu(\Omega)$ is finite and
\[\lim_{a_0\to\infty} \mu(\Omega_{a_0})= 0.\]
\end{lemma}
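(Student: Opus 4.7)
The plan is to exploit the diagonal $SO(3)$-invariance of $V_B$ together with the permutation symmetry in the index labelling the vectors $\vec X_k$, reducing the problem to a one-dimensional integral in the radius of the longest of these vectors. First, decompose
\[
    \Omega=\bigcup_{k=1}^{\mathrm{D}}\Omega^{(k)}, \qquad
    \Omega^{(k)}=\bigl\{(\vec X_1,\ldots,\vec X_{\mathrm{D}})\in\Omega:|\vec X_k|=\max_{1\leq \ell\leq \mathrm{D}}|\vec X_\ell|\bigr\}.
\]
Since relabelling the vectors preserves $V_B$, all the pieces have equal measure, so it suffices to estimate $\mu(\Omega^{(1)})$ and multiply by $\mathrm{D}$.

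Second, on $\Omega^{(1)}$ pass to spherical coordinates $\vec X_1=r\hat u$, with $r=|\vec X_1|$, $\hat u\in S^2$, and apply the rotation sending $\hat u$ to $\hat e_1$ to replace each remaining vector by $\vec X_j=a_j\hat e_1+\vec Y_j$, where $a_j\in\mathbb{R}$ and $\vec Y_j\in\mathbb{R}^2$. This change of variables is measure-preserving and, because $|\vec X_1\wedge\vec X_j|^2=r^2|\vec Y_j|^2$ for $j\geq 2$, it yields the key geometric inequality
\[
    V_B\geq\sum_{j=2}^{\mathrm{D}}|\vec X_1\wedge\vec X_j|^2 = r^2\sum_{j=2}^{\mathrm{D}}|\vec Y_j|^2.
\]
Inside $\Omega^{(1)}$ the defining condition $V_B<1$ then confines $(\vec Y_2,\ldots,\vec Y_{\mathrm{D}})$ to a ball of radius $1/r$ in $\mathbb{R}^{2(\mathrm{D}-1)}$, while the maximality of $|\vec X_1|$ and $|\vec X_j|^2=a_j^2+|\vec Y_j|^2$ forces $|a_j|\leq r$.

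Third, combine these bounds. For $r\geq 1$, the inner integral is controlled by $(2r)^{\mathrm{D}-1}$ from the $a_j$ directions times a constant times $r^{-2(\mathrm{D}-1)}$ from the volume of the ball in $\vec Y$-space, while the radial factor contributes $4\pi r^2$. Thus the integrand for $r\geq 1$ is of order $r^{3-\mathrm{D}}$, which is integrable at infinity if and only if $\mathrm{D}\geq 5$. The region $r\leq 1$ of $\Omega^{(1)}$ is trivially of finite measure, since all the $|\vec X_j|$ are then uniformly bounded. Summing over $k$ gives $\mu(\Omega)<\infty$.

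Finally, the elementary inequality $|(\vec X_1,\ldots,\vec X_{\mathrm{D}})|\leq\sqrt{\mathrm{D}}\,\max_k|\vec X_k|$ implies $\max_k|\vec X_k|\geq a_0/\sqrt{\mathrm{D}}$ on $\Omega_{a_0}$, so the same estimate, with the lower limit of the radial integral replaced by $a_0/\sqrt{\mathrm{D}}$, yields $\mu(\Omega_{a_0})=O(a_0^{4-\mathrm{D}})\to 0$. The main obstacle in this plan is conceptual rather than technical: one must see that it is precisely the inequality $V_B\geq r^2\sum_{j\geq 2}|\vec Y_j|^2$, retaining only the pairs involving the longest vector, which delivers the sharp exponent $3-\mathrm{D}$. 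Any weaker estimate on the transverse components would fail to close at $\mathrm{D}=5$, and the decomposition into the sets $\Omega^{(k)}$ is exactly what makes this bound available.
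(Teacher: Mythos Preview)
Your argument is correct and reaches the same threshold $\mathrm{D}\geq 5$ as the paper, but the mechanism is genuinely different. The paper does \emph{not} decompose by the longest vector; it simply fixes $\vec X_1$, aligns it with a direction $\vec e$, writes $\vec X_j=b_j\vec e+c_j\vec e_2+d_j\vec e_3$ for $j\geq 2$, and then performs a \emph{second} change of variables aligning the $(\mathrm{D}-1)$-vectors $C=(c_j)$ and $D=(d_j)$ with $B=(b_j)$. This yields the sharper lower bound $V_B\geq (a^2+|B|^2)(|C_\perp|^2+|D_\perp|^2)+a^2(\alpha^2+\beta^2)$, which confines $(\alpha,\beta,C_\perp,D_\perp)$ to an ellipsoid whose volume, after integrating $B$ over all of $\mathbb{R}^{\mathrm{D}-1}$, produces the same $a^{-(\mathrm{D}-3)}$ decay. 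In contrast, you keep only the cruder bound $V_B\geq r^2\sum_{j\geq 2}|\vec Y_j|^2$ and compensate by using the maximality constraint $|a_j|\leq r$ to control the parallel components directly. Your route is more elementary---no second alignment, no ellipsoid volume formula---and has the pleasant side effect that the lower limit of the radial integral is automatically $a_0/\sqrt{\mathrm{D}}$ on $\Omega_{a_0}$, so finiteness of $\mu(\Omega)$ and the decay of $\mu(\Omega_{a_0})$ fall out simultaneously without any separate treatment of the region near the origin in the $a$-variable. The paper's route, on the other hand, extracts more of the structure of $V_B$ and would be the natural starting point for the stronger moment bound $\int_\Omega\rho^2<\infty$ stated later.
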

\begin{proof}
We firstly notice that we can exchange the orders of integration below, because $V_B$ is a polynomial in its components. Fix one direction $\vec{e}$ and consider the change of variables that rotates $\vec{X}_1$ to $a\vec{e}$. Then
\[
       \mu(\Omega_{a})=4\pi \int_{a_0}^\infty a^2 \mathrm{d}a
        \int_{\hat{\Omega}_{a_0}} \prod_{j=2}^{\mathrm{D}}\mathrm{d}
        x_{j}^1 \mathrm{d}
        x_{j}^2 \mathrm{d}
        x_{j}^3
\]
where
\[
    \hat{\Omega}_{a}=\{(a\vec{e},\vec{X}_2,\ldots,\vec{X}_{\mathrm{D}}): V_B<1, \vert (a\vec{e},\vec{X}_2,\ldots,\vec{X}_{\mathrm{D}})\vert \geq a_0\}.
\]
Considering
\[\vec{X}_1= a\vec{e},\quad \vec{X}_i=b_i\vec{e}+X_i^{\perp}
\]
where $e\cdot X_i^{\perp}=0,\, i=2,\dots,\mathrm{D}$, we have
 \[
\vert \vec{X}_i\wedge\vec{X}_j\vert^2 =\vert b_i{X}_j^{\perp}- b_j{X}_i^{\perp}\vert^2 +\vert {X}_i^{\perp}\wedge {X}_j^{\perp}\vert^2.
\]
Write now, $X_i^{\perp}=c_i\vec{e}_2+d_i\vec{e_3}$ with $\vec{e}_2\cdot\vec{e}_3=0$ and $\vert\vec{e}_2\vert=\vert\vec{e}_2\vert=1$. Substituting in $V_B$ we get
\begin{equation} \label{potBCD}
\begin{aligned}V_B= a^2 (\vert C\vert^2+& \vert D\vert^2)+ \\ & \vert B\vert^2\vert C\vert^2-(B\cdot C)^2+ \\ & \vert B\vert^2\vert D\vert^2-(B\cdot D)^2+ \\ & \vert C\vert^2\vert D\vert^2-(C\cdot D)^2 \end{aligned}
\end{equation}
where we have denoted by $B;C;D$ the points in $\mathbb{R}^{D-1}$ with components
\[
B=\begin{pmatrix}
b_2\\b_3\\ \vdots \\ b_D
\end{pmatrix};C=\begin{pmatrix}
c_2\\c_3\\ \vdots \\ c_D
\end{pmatrix};D=\begin{pmatrix}
d_2\\d_3\\ \vdots \\ d_D
\end{pmatrix}.
\]
Then
\begin{equation} \label{lebeme}
\mu(\Omega_{a_0})=4\pi\int_{a_0}^\infty a^2 \mathrm{d}a \int_{\tilde{\Omega}_a} \prod_{i=2}^{D}\mathrm{d}b_i\prod_{j=2}^{D}\mathrm{d}c_j\prod_{k=2}^{D}\mathrm{d}d_k
\end{equation}
where
\[
     \tilde{\Omega}_a=\{(B;C;D):\text{RHS of }\eqref{potBCD}<1,\,
    a^2+|B|^2+|C|^2+|D|^2\geq a_0\}.
\]

In order to estimate the integrals in \eqref{lebeme} we change variables
to
\[
    C=\alpha \frac{B}{|B|} + C_{\perp} \qquad
    D=\beta \frac{B}{|B|} + B_{\perp}
\]
where $B\cdot C_{\perp}=B\cdot D_{\perp}=0$ so that $\alpha=C\cdot \frac{B}{|B|}$ and $\beta = D\cdot \frac{B}{|B|}$. The potential becomes
\begin{align*}
    V_B&=a^2(|C|^2+|D|^2)+|\beta|^2(|C_{\perp}|^2+|D_\perp|^2) \\
&\geq (a^2+|B|^2)(|C_\perp|^2+|D_{\perp}|^2)+a^2(\alpha^2+\beta^2).
\end{align*}
For $a$ and $B$ fixed, the region
\[
    E_{a,B}=\{(\alpha,C_{\perp},\beta,D_{\perp}):(a^2+|B|^2)(|C_\perp|^2+|D_{\perp}|^2)+a^2(\alpha^2+\beta^2)<1\}
\]
is an ellipsoid which contains
\[
    \{(\alpha,C_{\perp},\beta,D_{\perp}):V_B<1\}.
\]
Then
\begin{align*}
    \mu(\Omega_{a_0})& \leq k_1(\mathrm{D})\int_{a_0}^\infty
     a^2 \mathrm{d}a \int_{B\in \mathbb{R}^{\mathrm{D}-1}}
     \mu(E_{a,B}) \prod_{i=2}^{D}\mathrm{d}b_i \\
&= k_1(\mathrm{D})
\int_{a_0}^\infty
     \mathrm{d}a \int_{B\in \mathbb{R}^{\mathrm{D}-1}}
     \frac{\prod_{i=2}^{D}\mathrm{d}b_i}{(a^2+|B|^2)^{\mathrm{D}-2}} \\
   &= k_2(\mathrm{D}) \int_{a_0}^\infty \int_{0}^{\infty}
     \frac{u^{\mathrm{D-2}}\mathrm{d}u\mathrm{d}a}{(a^2+u^2)^{\mathrm{D}-2}} \\   &= k_3(\mathrm{D}) \int_{a_0}^\infty \frac{\mathrm{d}a}{a^{\mathrm{D}-3}}
\end{align*}
where $k_l(\mathrm{D})$ are constants.
Finaly notice that the right hand side is finite for all $a_0> 0$ and decreases to $0$ as $a_0\to\infty$, whenever $\mathrm{D}\geq 5$.
\end{proof}

We denote by $H^p (\Omega)$ and $H^p_0(\Omega)$, respectively, the Sobolev spaces $H^{p,2}(\Omega)$ and $\mathring{W}^{p,2}(\Omega)$ in the notation of \cite{berger-schechter}. A crucial observation here is the fact that these spaces are amenable to patching inner and outer domains in the solution of Dirichlet problems. We recall that $H^p(\Omega)$ is the Hilbert space arising from restricting to $\Omega$ functions in the Sobolev space $H^p(\mathbb{R}^n)$, the norm being the infimum of the Sobolev norm over all possible extensions. We also recall that $H^p_0(\Omega)$ is the completion with respect to this norm of
all smooth functions with support a compact subset of $\Omega$.  By combining lemma~\ref{L2} with Theorem~2.8 of \cite{berger-schechter}, we immediately obtain the remarkable property that that both $H^p (\Omega)$ and $H^p_0(\Omega)$ are compactly embedded into $L^2(\Omega)$ for $\mathrm{D}\geq 5$.

\section{Bounds for the fermionic potential} In order to prove the existence and uniqueness of the solution to the outer Dirichlet problem we need a bound on the contribution of the mean value of the fermionic potential $(u,V_Fu)_{L^2(\Omega)}$. We notice that the fermionic potential is linear on the bosonic coordinates. Then

$$\vert(u,V_F u)\vert_{L^2(\Omega)}\le C(u,\rho u)_{L^2 (\Omega)}$$ for some $C>0$,
where $\rho^2 =\vert x\vert^2\equiv a^2+\vert B\vert^2+\vert C\vert^2+\vert D\vert^2 $. From Lemma~\ref{L2}, it is possible to derive the following bound,

$$\vert(u,V_F u)\vert_{L^2(\Omega)}\le C \left\|u\right\|^2_{L^{\infty}(\Omega)}\int_{\Omega}\rho$$
provided that
\begin{equation} \label{intcond}
\int_{\Omega}\rho<\infty.
\end{equation}  In turns, we have the following results which follows from similar arguments as those of lemma~\ref{L2}.

\begin{lemma}
If $D>5$, then  $\int_{\Omega}\rho^2<\infty$.
\end{lemma}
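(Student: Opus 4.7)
The plan is to adapt the proof of Lemma~\ref{L2} by carrying the extra weight $\rho^2=a^2+|B|^2+|C|^2+|D|^2$ inside the integrand. First I would split $\int_\Omega\rho^2\,dx$ as the sum of the piece over $\Omega\cap\{\rho\le 1\}$, which is bounded by $\mu(\Omega)<\infty$ thanks to Lemma~\ref{L2}, and the outer piece over $\Omega\cap\{\rho>1\}$, on which all the genuine work lies.

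For the outer piece I would re-run the rotation and decomposition chain of Lemma~\ref{L2}: rotate $\vec X_1$ into $a\vec e$, expand $\vec X_i=b_i\vec e+c_i\vec e_2+d_i\vec e_3$ for $i\ge 2$, and then split $C=\alpha B/|B|+C_\perp$, $D=\beta B/|B|+D_\perp$. This yields the same ellipsoid containment $\{V_B<1\}\subset E_{a,B}$, with volume $\mu(E_{a,B})=c/[a^2(a^2+|B|^2)^{\mathrm{D}-2}]$. The key new observation is that on $E_{a,B}$ one has $\alpha^2+\beta^2<1/a^2$ and $|C_\perp|^2+|D_\perp|^2<1/(a^2+|B|^2)$, so that
\[
|C|^2+|D|^2=\alpha^2+|C_\perp|^2+\beta^2+|D_\perp|^2 < \frac{1}{a^2}+\frac{1}{a^2+|B|^2}\le \frac{2}{a^2}.
\]
Consequently, for $a\ge 1$ these contributions to $\rho^2$ are uniformly bounded and can be absorbed. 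The leading contribution to the tail therefore reduces to the integral of $(a^2+|B|^2)\mu(E_{a,B})$ in the variables $(a,B)$, which is precisely the integral computed in Lemma~\ref{L2} with one power of $(a^2+|B|^2)$ less in the denominator.

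From here I would copy the last two steps of Lemma~\ref{L2}: pass to polar coordinates in $B\in\mathbb{R}^{\mathrm{D}-1}$ and substitute $u=av$ to carry out the inner integration, leaving a one-dimensional integral in $a$ whose convergence at infinity fixes the admissible range of $\mathrm{D}$. The main obstacle I expect is the bookkeeping of the resulting exponents around the critical value, together with verifying that the subleading $1/a^2$ term coming from $|C|^2+|D|^2$ and the boundary piece at $|\vec X_1|\le 1$ do not generate a logarithmic singularity at the endpoint. If the straight adaptation from the ellipsoid bound turns out to be too lossy at the critical threshold, a cleaner alternative is to exploit the fact that $V_B$ depends only on the Gram matrix $X_iX_j$ and is therefore invariant under right multiplication of the $3\times\mathrm{D}$ matrix $(\vec X_1,\dots,\vec X_{\mathrm{D}})$ by $O(\mathrm{D})$, and then pass to singular-value coordinates $\sigma_1\ge\sigma_2\ge\sigma_3\ge 0$, reducing the problem to a triple integral with an explicit Jacobian in which the $\mathrm{D}$-dependence, and hence the sharp threshold, is fully transparent.
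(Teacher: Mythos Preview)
Your plan is exactly what the paper intends: it offers no separate argument and simply states that the lemma ``follows from similar arguments as those of Lemma~\ref{L2}.'' Your weighted rerun of that proof---carry $\rho^{2}$ through the rotation, the $(B;C;D)$ coordinates, and the ellipsoid containment $\{V_B<1\}\subset E_{a,B}$, then observe that on $E_{a,B}$ the contribution $|C|^{2}+|D|^{2}$ is $O(a^{-2})$ so that $\rho^{2}\sim a^{2}+|B|^{2}$---is precisely the ``similar argument''.

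One remark on the bookkeeping you flagged: finishing the count, the leading term becomes
\[
\int^{\infty}\! da \int_{\mathbb{R}^{\mathrm{D}-1}} \frac{dB}{(a^{2}+|B|^{2})^{\mathrm{D}-3}}
= c\int^{\infty}\! a^{\,5-\mathrm{D}}\,da,
\]
so the ellipsoid route yields finiteness for $\mathrm{D}>6$, one unit stricter than the stated hypothesis $\mathrm{D}>5$. Your singular-value alternative gives the same exponent (the dyadic contribution from $\sigma_1\sim R$ scales like $R^{\,6-\mathrm{D}}$), so your caution about the critical threshold is warranted: the argument indicated in the paper does not reach $\mathrm{D}=6$, and nothing sharper is supplied there.
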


As a corollary of this lemma, we obtain that for $D>5$, also \eqref{intcond} holds true. We hope to discuss a sharp bound of the form $(u,V_Fu)_{L^2 (\Omega)}\le k\left\|u\right\|^2 _{H^1(\Omega)}$ in future work.
	
\section{Conclusions} We have shown that the volume of the valleys, the set $\Omega$, is finite when the dimension of the target space on which the supermembrane theory is formulated is greater than or equal to five (transverse) dimensions. This include the important 7 and 11-dimensional supermembranes.
Using a framework due to Berger and Schechter we have shown that on $\Omega$, the embeddings of $H^1(\Omega)$ and $H^1_0(\Omega)$ onto $L^2(\Omega)$ are compact. Notice that this property is not related to the zero point energy of the bosonic membrane, a main observation in the argument to conclude that the bosonic membrane has discrete spectrum.  In fact this result does not depend on the target space dimension. Furthermore, we have argued with supporting evidence about bounds for the mean value of the fermionic potential.  We have then shown  properties i) and iii) proposed in the introduction and claim that it is possible to get an appropiate bound for the mean value of the fermionic potential on any wave function in $H_0^1(\Omega)$.
The complete proof of the three statements will allow determining the existence and uniqueness of the solution of the outer Dirichlet problem on  the valleys of the bosonic potential.

\section{Acknowledgements}  AR and MPGM were partially supported by Projects Fondecyt 1161192 (Chile). LB kindly acknowledges support from MINEDUC-UA project code ANT1755. This research was initiated as part of a visit of this author to the Universidad de Antofagasta in April~2018 and completed when he was on a study leave at the {\v C}esk{\'e} Vysok{\'e} U{\v c}en{\'i} Technick{\'e} v Praze in November~2018.

\end{document}